\newcolumntype{C}{>{\centering\arraybackslash}X} 
\newtheorem{thm}{Theorem}
\newtheorem{proof}{proof}
\begin{document}

\title{Throughput  Analysis of UAV-assisted Cellular Networks by Mat\'{e}rn Hardcore Point Process}


\author{\IEEEauthorblockN{Mengbing~Liu, Guangji~Chen, and Ling~Qiu}\\
\thanks{Mengbing Liu, Guangji Chen, and Ling Qiu are with Key Laboratory of Wireless-Optical Communications, Chinese Academy of Sciences, School of Information Science and Technology, University of Science and Technology of China (email: liumb@mail.ustc.edu.cn)}

}

\maketitle

\begin{abstract}
Unmanned aerial vehicles (UAVs) are expected to coexist with conventional terrestrial cellular networks and become an important component to support high rate transmissions. This paper presents an analytical framework for evaluating the throughput performance of a downlink two-tier heterogeneous network. Considering the minimum distance constraint among UAVs, Mat\'{e}rn hardcore point process (MHP) is utilized to model the locations of UAVs. The locations of terrestrial base stations (BSs) are modeled by Poisson point process (PPP). Tools of stochastic geometry are invoked to derive tractable expressions for average data rates of users. With the analytical results, we discuss the optimal combinations of UAVs' height and power control factor. The result shows that an appropriate power control factor can effectively maximize UAV users' average data rate as well as guaranteeing the BS users' performance under our proposed model.
\end{abstract}

\begin{IEEEkeywords}
Unmanned aerial vehicles,  Mat\'{e}rn hardcore point process, average data rate, stochastic geometry.
\end{IEEEkeywords}

\IEEEpeerreviewmaketitle

\vspace{-6pt}
\section{Introduction}
Unmanned aerial vehicles (UAVs) have attracted growing interest to provide short-term wireless communication links to ground users in some temporary or unexpected scenarios, where excessive need for network resources is required such as festival, rescue, etc. Thus, deploying UAVs as well as small-cell base stations (BSs) appears to be a promising approach to meet the ever increasing data demand and balance the heavy load of macro BSs, of which the system called UAV-assisted cellular networks \cite{Zeng2019Accessing}, \cite{Tutorial2019MM}.

For UAV-assisted cellular system, modeling their locations as a random point process by stochastic geometry (SG) is an effective method to evaluate the network level performance \cite{j2011atractable}. Most of the existing work regarding UAV-assisted communication system modeled the locations of UAVs and BSs as Poisson point process (PPP) due to its tractability [4]-[6].  The coverage probability with user-centric strategy and UAV-centric strategy have been investigated in NOMA assisted multi-UAV framework in \cite{NOMA2019Liu}. The optimal UAV density has been obtained to maximize the downlink throughput in a two-tier system in \cite{c2017spectrum}. The coverage probability  of a two-tier downlink network has been studied in hot spots in \cite{e2018downlink}. However, PPP cannot characterize UAVs' deployment accurately in some specific scenarios and adopting the suitable topology model is of great importance for performance analysis \cite{drone2016am}-[9].  For example, deploying UAVs to malfunction areas where ground BSs cannot work properly is an important application. For such a scenario, the UAVs are usually deployed in a given area. To capture this feature, the Binomial point process (BPP) was employed in \cite{drone2016am} to model the locations of UAVs in a disaster recovery scenario.  Considering that wireless systems possess a clustered property in the hot spots, Poisson cluster process (PCP) was adopted to  model the locations of UAVs and the optimal height of UAVs was derived to maximize the system coverage probability in \cite{A2018wyi}. What's more, considering some secure issues, UAVs are not allowed to be closer than a certain distance in reality, which suggests Mat\'{e}rn hardcore point process (MHP) is more suitable to model the locations of UAVs and secrecy rate has been analyzed in a single tier UAV network modeled by MHP in \cite{secrecy2018yz}.

While previous works in evaluating network performance of UAV-assisted networks have provided solid design insights, blanks are still left unfilled. This paper is motivated by the following observations: 1) MHP has been already used to model the locations of UAVs in \cite{secrecy2018yz}.  However, the system
network is a single tier UAV network . The analytical result of a two-tier UAV-assisted cellular network with co-channel interference between BS tier and UAV tier has not been investigated. What's more, the Laplace function of  aggregate interference is complicated which is caused by the repulsion property among points and additional inter-tier interference in a two-tier MHP-based network. 
 Thus, a more tractable expression for performance analysis is necessary to be explored.
 2) Due to the line-of-sight (LoS) dominated channel characteristic, UAVs'air-to-ground (A2G) channels will inevitably generate severe interference to the BS users (BUEs) while considering a two-tier co-channel network. \cite{c2017spectrum} applied primary exclusive regions to reduce the inter-tier interference to cellular users which is caused by UAVs. \cite{e2018downlink}-\cite{drone2016am} didn't consider interference management to mitigate the inter-tier interference. Thus, in order to improve system performance, it is necessary to utilize effective interference management to mitigate the severe inter-tier interference to BUEs. These two observations advocate that developing a systematic mathematically tractable framework to facilitate performance evaluation of MHP-based UAV-assisted networks by incorporating interference management is desired. To the best of our knowledge, none of the previous researches jointly consider the aforementioned two issues in the analysis.

In this paper,  MHP is utilized to model the locations of UAVs
in which points are not allowed to be closer to each other than a certain minimum distance\cite{am2013coverage} and the locations of terrestrial BSs are modeled as PPP. We consider employing power control to mitigate the interference to BUEs caused by UAVs' transmission via A2G links. Besides this, the Laplace function of the aggregate interference in MHP-based networks is much more complicated than that in PPP-based networks, which makes the analysis challenging. To address this problem, \cite{m2014themean} introduced a simple yet powerful analytical framework for a tight SIR distribution approximation in general non-Poisson networks which is called mean interference-to-signal ratio (MISR)-based gain method. However, the work above just proposed a general MISR framework, the exact expression of MISR-based gain in MHP-based network is still unexplored. Aiming at tackling the issue, the main contributions of this paper are summarized as follows:

1) With the aid of MISR-based gain method , we exploit the approximate relationship for distributions of SIR to analyze capacity and obtain tractable approximations of users' average data rates. The analytical results are validated by Monte-Carlos simulations.

2) Based on the analytical results, we further discuss the problem how to maximize the average data rate of UAV users (UUEs) while guaranteeing the performance of BUEs. The result has shown that power adjustment has more significant effect on the performance than height under our proposed model.

\vspace{-6pt}
\section{System Model}

We consider a downlink two-tier heterogeneous network as shown in Fig. 1, where $u$ and $b$ represent the UAV tier and BS tier respectively. The locations of UAVs and BSs are modeled as MHP ${\Phi _u}=\{x_0^u,x_1^u,x_2^u,...\}$ with density ${\lambda _u}$ at the same altitude $h$ and PPP ${\Phi _b}=\{x_0^b,x_1^b,x_2^b,...\}$ with density ${\lambda _b}$.  
 The UAVs have transmit power $P_u$ and each UAV serves its nearest single-antenna user. The UUEs are distributed according to another independent PPP.  Hence, the distance between a user and its serving UAV follows the distribution ${f_{r_u}}\left( r \right)= 2\pi {\lambda _u}{r}{e^{ - \pi {\lambda _u}\left( {{r^2} - {h^2}} \right)}}$. The BSs have transmit power $P_b$ and each BS serves a single-antenna user, who is uniformly distributed in the circle region centered at $x_i^b$ with radius $R_b$. Therefore, the distance between a user and its serving BS follows the distribution  ${f_{r_b}}\left( r \right) = \frac{{2r}}{{R_b^2}},0\le r \le R_b$ \cite{Sto2018Chen}. Considering the scarcity of frequency resource, universal frequency reuse is assumed, resulting in that BUEs may suffer significant interference from nearby UAVs. Therefore, UAVs should reduce the transmit power to ${\eta {P_u}}$, where $\eta  \in \left[ {0,1} \right]$ is the power control factor.

For the ground-to-ground (G2G) transmission, we adopt the standard path loss model with path loss exponent $\alpha_b > 2$. The small scale fading is assumed to be Rayleigh fading with unit mean. The A2G channel characteristics significantly different from G2G channels, 
depending on the environment, A2G links can be LoS or non-LoS (NLoS).
The expression for the LoS probability has been obtained in \cite{aal2014optimal}:
  \begin{small}
 \begin{equation}
{P_{l}(r)} = \frac{1}{{1 +  C\exp \left[ { - B\left( {\theta  - C} \right)} \right]}},
  \end{equation}
  \end{small}
where $\theta  = \frac{{180}}{\pi }\arcsin \left( {\frac{h}{r}} \right)$ denotes the elevation angle of a A2G link, $B$ and $C$ are environment constants. Furthermore, the probability of NLoS channel is ${P_{n}}(r) = 1 - {P_{l}}(r)$. Therefore, the set of UAVs ${\Phi _u}$ can be decomposed into two independent sets, i.e.,${\Phi _u}= {\Phi_l} \bigcup {\Phi_n}$, where ${\Phi_l}=\{x_0^l,x_1^l,x_2^l,...\}$ and ${\Phi_n}=\{x_0^n,x_1^n,x_2^n,...\}$ denote the set of LoS and NLoS UAVs, respectively. According to this, the path loss exponents for LoS and NLoS links are $ \alpha_l$ and $\alpha_n$. As for small scale fading, Nakagami-m fading is assumed.
The LoS link and the NLoS link have their own  parameters $m_l$ and $m_n$ \cite{w2008mathm}. Since the networks are typically interference limited, we ignore noise in this paper.

 Based on the system model, we first define the signal-to-interference (SIR) of users. Due to the stationary of MHP and PPP, we consider the performance of a typical BUE or UUE located at the origin in this paper.

 1) The SIR received by a typical BUE can be expressed as:
 \begin{small}
  \begin{align}
SI{R_0^b} &= \frac{{g_{00}^{bb}{\|x_0^b\|^{ - {\alpha _b}}}}}{{{I_u^b} + {I_b}}},
 \end{align}
 \end{small}
 \begin{figure}[!h]
\setlength{\abovecaptionskip}{-5pt}
\setlength{\belowcaptionskip}{10pt}
\centering
\includegraphics[width= 0.4\textwidth]{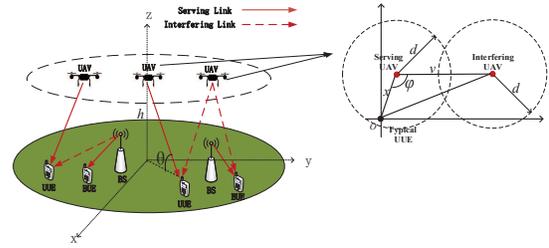}
\DeclareGraphicsExtensions.
\caption{Illustration of the system model. The red points are the projections of UAVs onto the ground, x is the horizontal distance between the serving UAV and typical UUE, v is the distance between serving and interfering UAV, d is the minimum distance among UAVs.}
\label{picture1}
\vspace{-10pt}
\end{figure}

where ${I_u^b}$ is the total interference from the UAVs and ${I_b}$ is the total interference from other BSs.

\begin{small}
 \begin{align}
 &{I_u^b}= \frac{{\eta {P_u}}}{{{P_b}}}\left( {\sum\limits_{x_i^l \in {\Phi _l}} {{\|x_i^l\|}^{ - {\alpha _l}}g_{i0}^{lb}}  + \sum\limits_{x_j^n \in {\Phi _n}} {{\|x_j^n\|}^{ - {\alpha _n}}g_{j0}^{nb}} } \right),\\
&{I_{b}}= \sum\limits_{{x_k^b} \in {\Phi _b}\backslash \left\{ x_0^b \right\}} {{\|x_k^b\|}^{ - {\alpha _b}}g_{k0}^{bb}},
 \end{align}
 \end{small}

 where $g_{00}^{bb}\sim \exp (1)$ and $g_{k0}^{bb}\sim \exp (1)$ are the channel gains between a typical BUE and its serving BS ${x_0^b} $ and other interfering BSs ${x_k^b} \in {\Phi_b}\backslash {\{x_0^b\}}$ \cite{Adaptive2010zhang}. And  $ g_{i0}^{lb}$, $ g_{j0}^{nb}$ are the gains between a typical BUE and LoS UAVs ${x_i^l} \in {\Phi_l}$ and NLoS UAVs ${x_j^n}\in {\Phi_n}$ which are follows Gamma distribution with probability density function (PDF) given by \cite{w2008mathm} with different parameters $m_l$ and $m_n$:
\begin{small}
\begin{equation}
 \label{nakagami}
{f_{g_{x_i^u}}}\left( g \right) = \frac{{{m^m}{g^{m - 1}}}}{{\Gamma \left( m \right)}}{e^{ - mg}}, \forall x_i^u \in {\Phi _u},
\end{equation}
\end{small}
where ${\Gamma \left( m \right)}$ is given by ${\Gamma \left( m \right)}=  \int\limits_0^{\infty} {{x^{m - 1}}{{ e }^{-x}}} dx$.

2) The SIR received by a typical UUE can be expressed as:
 \begin{small}
  \begin{align}
SI{R_0^s} &= \frac{{g_{00}^{ss}{\|x_0^s\|^{ - {\alpha _s}}}}}{{{I_b^s} + {I_u}}},
 \end{align}
 \end{small}
where ${I_b^s}$ is the total interference from the BSs and ${I_u}$ is the total interference from other UAVs:
\begin{small}
\begin{align}
&{I_{u}}= {\sum\limits_{x_i^s \in {\Phi _s} \backslash \left\{ x_0^s \right\}} {{\|x_i^s\|}^{ - {\alpha _s}}g_{i0}^{ss}}  + \sum\limits_{x_j^{s'} \in {\Phi _{s'}}} {{\|x_j^{s'}\|}^{ - {\alpha _{s'}}}g_{j0}^{{s'}s}} },\\
&{I_b^s}= \frac{{{P_b}}}{{\eta {P_u}}}\sum\limits_{{x_k^b} \in {\Phi _b}} {{\|x_k^b\|}^{ - {\alpha _b}}g_{k0}^{bs}},
\end{align}
\end{small}
 where $s \in \left\{ {l,n} \right\},s' = \left\{ {l,n} \right\}\backslash{\{s\}}$. $g_{k0}^{bs}\sim \exp (1)$ is the channel gain between a typical UUE and BSs ${x_k^b} \in {\Phi_b}$. And $g_{00}^{ss}$, $g_{i0}^{ss}$ and $g_{j0}^{{s'}s}$ are the gains between a typical UUE and its serving UAV ${x_0^s} $ and other UAVs ( ${x_i^s} \in{\Phi_s} \backslash {\{x_0^s\}}$ , ${x_j^{s'}} \in{\Phi_{s'}}$ ) which are follows Gamma distribution with PDF in (\ref{nakagami}) with parameters $m_s$.


The system area spectrum efficiency (ASE)£¬ which is defined as the average throughput per Hz per unit area :
\begin{small}
\begin{equation}
{\rm ASE}={\lambda_u}{R_u}+{\lambda_b}{R_B},
\end{equation}
\end{small}
where $ R_u = {\mathbb{E}}[\rm{log}(1 + \rm{SIR_0^s})]$ and  $ R_B = {\mathbb{E}}[\rm{log}(1 + \rm{SIR_0^b})]$ denote the typical UUE's and typical BUE's average data rate, respectively.
\vspace{-6pt}
\section{Analysis of the Average Data Rate}
Before deriving the approximate expression of users' average data rate, some preliminary results are provided.

\vspace{-6pt}
\subsection{ Mat\'{e}rn Hardcore Point Process}

 MHP is a kind of repulsive point process which is thinned from a independent PPP ${\Phi _p}=\{x_0^p,x_1^p,x_2^p,...\}$ with density ${\lambda _p}$ \cite{am2013coverage}. 
 The density of MHP is
\begin{small}
  \begin{equation}
 {\lambda _u}{\rm{ = }}p{\lambda _p} = \frac{{1{\rm{ - }}\exp ( - {\lambda _p}\pi {d^2})}}{{\pi {d^2}}}.
\end{equation}
\end{small}
 The second order product density of MHP ${\Phi _u}$ is given by
  \begin{small}
\begin{equation}
  \label{pdf}
  {\chi ^{\left( 2 \right)}}\left( v \right){\rm{ = }}\left\{ \begin{array}{l}
\frac{{2V\left( v \right)\left[ {1 - e^{ - {\lambda _p}\pi {d^2}} } \right]}-{2\pi {d^2}\left[ {1 -  e^{ - {\lambda _p}V(v)}} \right]} }{{\pi {d^2}V\left( v \right)\left[ {V\left( v \right) - \pi {d^2}} \right]}} ,d \le v \le 2d{\kern 1pt} {\kern 1pt} \\
 {\lambda _u}^2,{\kern 1pt} v \ge 2d\\
0,otherwise{\kern 1pt},
\end{array} \right.
  \end{equation}
\end{small}

where $V(v)$ is the area of the union of two circles as Fig. 1, it can be calculated as follow:
 \begin{small}
 \begin{equation}
V(v) = 2\pi {d^2} - {\kern 1pt} 2{d^2}{\cos ^{ - 1}}(\frac{v}{{2d}}) + v\sqrt {{d^2} - \frac{{{v^2}}}{4}}.
  \end{equation}
\end{small}



\vspace{-6pt}
\subsection{ MISR-based Gain in MHP}
In terms of MHP, performance analysis is not a easy task. In order to address the problem, we adopt MISR-based gain method \cite{m2014themean} and the gain is calculated as:
 \begin{small}
\begin{equation}
 \label{gain}
\rm{G} = \frac{{\rm{MIS}{\rm{R}_{\rm{PPP}}}}}{{\rm{MIS}{\rm{R}}}},
 \end{equation}
 \end{small}
where MISR is defined as $\rm{MISR} \buildrel \Delta \over = \mathbb{E}\left( {\frac{1}{{{\mathbb{E}_{\rm{g}}}\left(\rm{S} \right)}}} \right)$ and ${\mathbb{E}_{\rm{g}}}\left( \rm{S} \right) $ is the signal power averaged over the fading. To obtain the MISR-based gain, we need first derive the expressions of ${\rm{MIS}}{{\rm{R}}_{\rm{PPP}}}$ and $ {\rm{MIS}}{{\rm{R}}_{\rm{MHP}}}$ in Theorem 1.

\begin{thm}
 The expressions of ${\rm{MIS}}{{\rm{R}}_{\rm{PPP}}}$ and $ {\rm{MIS}}{{\rm{R}}_{\rm{MHP}}}$  are:
 \end{thm}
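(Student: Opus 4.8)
The plan is to evaluate both quantities through the ratio of mean interference to mean signal under nearest-point association, $\mathrm{MISR}=\mathbb{E}_{r_0}\!\bigl[\mathbb{E}[I\mid r_0]\,/\,\mathbb{E}_g[S\mid r_0]\bigr]$, where $r_0=\sqrt{x_0^2+h^2}$ is the distance from the typical UUE to its serving UAV (horizontal distance $x_0$), the unit-mean Nakagami fading gives $\mathbb{E}_g[S\mid r_0]=r_0^{-\alpha_s}$, and $\mathbb{E}[I\mid r_0]$ is the mean aggregate interference from the remaining UAVs conditioned on the serving-link length. For both processes the serving distance obeys the nearest-neighbour law $f_{r_u}$, so the two MISR values differ only through the conditional spatial law of the interferers.

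For the reference PPP I would condition on the serving UAV at horizontal distance $x_0$ and invoke Slivnyak's theorem: the remaining UAVs form the same homogeneous PPP, restricted to horizontal distances exceeding $x_0$. The conditional mean interference is then the radial integral $\lambda_u\!\int_{x_0}^{\infty} 2\pi\rho\,(\rho^2+h^2)^{-\alpha_s/2}\,d\rho=\tfrac{2\pi\lambda_u}{\alpha_s-2}(x_0^2+h^2)^{1-\alpha_s/2}$. Dividing by $r_0^{-\alpha_s}=(x_0^2+h^2)^{-\alpha_s/2}$ leaves a term proportional to $x_0^2+h^2$, so averaging over $f_{r_u}$ needs only $\mathbb{E}[x_0^2]=1/(\pi\lambda_u)$ and yields $\mathrm{MISR}_{\mathrm{PPP}}$ in closed form, collapsing to the classical $2/(\alpha_s-2)$ when $h=0$.

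For the MHP the interferers are no longer independent of the serving UAV, so I would replace the flat intensity $\lambda_u$ by the conditional (Palm) intensity of a second point, $\chi^{(2)}(v)/\lambda_u=\lambda_u\,g(v)$, with pair-correlation $g(v)=\chi^{(2)}(v)/\lambda_u^{2}$ read off from (\ref{pdf}). Writing $g=1+(g-1)$ splits the interference into a PPP-identical term plus a correction whose integrand $(g-1)$ is supported only on separations $v\le 2d$ from the serving UAV, since $g\equiv 1$ beyond $2d$ and $g\equiv 0$ inside the hard-core radius $d$. The PPP-identical term reproduces the computation above, so the entire effect of the repulsion is concentrated in this localized correction.

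The main obstacle is the correction integral, because the path loss is controlled by the interferer's distance to the \emph{user} while $g$ is controlled by its distance to the \emph{serving} UAV, and the nearest-neighbour cutoff couples the two radii through the law of cosines. I expect to resolve this by localizing: on $v\le 2d$ the interferer sits within $2d$ of the serving UAV, so for $r_0\gg d$ its distance to the user is $\approx r_0$ and the path loss factors out as $r_0^{-\alpha_s}$. The correction then reduces to $r_0^{-\alpha_s}$ times the one-dimensional constant $2\pi\lambda_u\!\int_{0}^{2d}\!\bigl(g(v)-1\bigr)v\,dv$ (carrying an angular weight from the $\|\mathbf z\|>r_0$ constraint), which cancels the signal term exactly; averaging over $f_{r_u}$ therefore gives $\mathrm{MISR}_{\mathrm{MHP}}=\mathrm{MISR}_{\mathrm{PPP}}+K$ with $K<0$. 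Substituting $\chi^{(2)}$ from (\ref{pdf}) evaluates $K$, and forming the ratio produces the gain $\mathrm{G}$ of (\ref{gain}).
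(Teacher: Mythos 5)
Your starting point is the same as the paper's: write the MISR conditioned on the serving distance, compute the conditional mean interference-to-(mean-)signal ratio by Campbell's theorem, and, for the MHP, replace the flat intensity by the two-point intensity $\chi^{(2)}(v)/\lambda_u$ with the interferer-to-user distance obtained from the law of cosines, $\sqrt{v^2+r^2-2xv\cos\varphi}$ — that is precisely the device in the paper's Appendix A. But two of your steps would fail. First, your PPP computation silently drops the LoS/NLoS decomposition: in this model every interfering A2G link is LoS with probability $P_l(y)$ and NLoS otherwise, with distinct exponents $\alpha_l,\alpha_n$, so the conditional mean ISR is $2\pi\lambda_p r^{\alpha_s}\sum_{q\in\{l,n\}}\int_r^\infty P_q(y)\,y^{1-\alpha_q}\,dy$. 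Since $P_q(y)$ is the elevation-angle sigmoid of (1), this does not integrate to anything like your $\tfrac{2\pi\lambda_u}{\alpha_s-2}(x_0^2+h^2)^{1-\alpha_s/2}$, and the claimed collapse to the classical $2/(\alpha_s-2)$ is wrong for this model. (Also, the paper's reference PPP is the parent process, with density $\lambda_p$ and serving law $f_{r_p}$, not $\lambda_u$ with $f_{r_u}$; using the thinned density changes the gain $\mathrm{G}$.)

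Second, and more fundamentally, your far-field localization of the repulsion correction is invalid in the regime the theorem is meant to cover. Factoring the path loss out as $r_0^{-\alpha_s}$ on the support $v\le 2d$ requires $r_0\gg d$, but with the paper's parameters ($\lambda_u=10\,\mathrm{km}^{-2}$, $d=100\,$m) the typical serving distance $\approx 1/(2\sqrt{\lambda_u})\approx 158\,$m is comparable to $d$, so the reduction $\mathrm{MISR}_{\mathrm{MHP}}=\mathrm{MISR}_{\mathrm{PPP}}+K$ with a constant $K$ is an uncontrolled approximation that the paper never makes. Moreover the nearest-serving exclusion is not a mere angular weight you can append afterwards: it removes interferers with $v<2x|\cos\varphi|$, and that region overlaps the hard-core disk $v<d$ in a $\varphi$-dependent way, so your split $g=1+(g-1)$ does not cleanly separate a ``PPP-identical term'' from a localized correction. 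The paper resolves exactly this coupling by keeping the full double integral over $(v,\varphi)$ with the bounds $v_1=\max\left[d,\,2x\left|\cos\varphi\right|\right]$ and $v_2=\max\left[2d,\,2x\left|\cos\varphi\right|\right]$ and the two branches $\chi_1^{(2)},\chi_2^{(2)}$ of the product density, rather than collapsing it to a constant. The repair is to stop at your penultimate step: apply Campbell with intensity $\chi^{(2)}(v)/\lambda_u$, retain the LoS/NLoS mixture $\sum_q P_q(v)$ inside the integrand, and impose the exact exclusion bounds — which yields the paper's stated expressions verbatim.
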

  \begin{small}
\begin{align}
{\rm{MIS}}{{\rm{R}}_{{\rm{PPP}}}} &= \sum\limits_{s \in \left\{ {l,n} \right\}} {\int\limits_{\rm{h}}^\infty  {{P_s}({\rm{r}}){{\rm{{\mathbb{E}}}}_{p,s}}\left( {{\rm{I\bar SR}}\left| r \right.} \right){f_{r_p}}\left( r \right)dr} },\\
{\rm{MIS}}{{\rm{R}}_{{\rm{MHP}}}} &= \sum\limits_{s \in \left\{ {l,n} \right\}} {\int\limits_{\rm{h}}^\infty  {{P_s}({\rm{r}}){{\rm{{\mathbb{E}}}}_{u,s}}\left( {{\rm{I\bar SR}}\left| r \right.} \right){f_{r_u}}\left( r \right)dr} },
 \end{align}
\end{small}
with
 \begin{small}
\begin{align}
\mathop {{{\rm{{\mathbb{E}}}}_{\rm{p},\rm{s}}}\left( {{\rm{I\bar SR}}\left| r \right.} \right)} &= 2\pi {\lambda _p}{r^{{\alpha _{\rm{s}}}}}\sum\limits_{q \in \{ l,n\} } {\int_r^\infty  {{P_q}(y){y^{1 - {\alpha _q}}}} dy}, \\
\mathop {{{\rm{{\mathbb{E}}}}_{\rm{u},\rm{s}}}\left( {{\rm{I\bar SR}}\left| r \right.} \right)} &= \frac{{{r^{{\alpha _s}}}}}{{{\lambda _u}}}\int\limits_0^{2\pi } {\int\limits_{{v_1}}^{{v_2}} {F(x,v,\varphi )} } \chi _1^{(2)}\left( v \right)dvd\varphi \nonumber \\
&+ \frac{{{r^{{\alpha _s}}}}}{{{\lambda _u}}}\int\limits_0^{2\pi } {\int\limits_{{v_2}}^\infty  {F(x,v,\varphi )} \chi _2^{(2)}\left( v \right)dvd\varphi },
 \end{align}
 \end{small}
and $F(x,v,\varphi )$ is given below:
 \begin{small}
 \begin{equation}
F(x,v,\varphi ) = \sum\limits_{q \in \left\{ {l,n} \right\}} {\frac{{{P_q}(v)v}}{{{{(\sqrt {{v^2} + {{\rm{r}}^2} - 2xv\cos \left( \varphi  \right)} )}^{{\alpha _q}}}}}},
\end{equation}
\end{small}
where  $x \!\!\!=\!\!\! \sqrt {{r^2} - {h^2}}$. The integral boundary ${v_1} = \max \left[ {d,2x\left| {\cos (\varphi )} \right|} \right]$, ${v_2} = \max \left[ {2d,2x\left| {\cos (\varphi )} \right|} \right]$. For simplicity of notation, the expression of ${\chi ^{\left( 2 \right)}}\left( v \right)$ in (\ref{pdf}) simplify as:

 \begin{small}
 \begin{align}
{\chi ^{\left( 2 \right)}}\left( v \right) = \left\{ {\begin{array}{*{20}{c}}
{\chi _1^{(2)}\left( v \right),}&{if{\kern 1pt} {\kern 1pt} d \le v \le 2d}\\
{\chi _2^{(2)}\left( v \right),}&{if{\kern 1pt} {\kern 1pt} {\kern 1pt} v \ge 2d}.
\end{array}} \right.
\end{align}
\end{small}
\begin{proof}
See Appendix A.
\end{proof}

\vspace{-6pt}
\subsection{Stochastic Geometry Analysis}

\begin{figure*}
\centering
\begin{minipage}[t]{0.33\linewidth}
\centering
\includegraphics[width=1.9in]{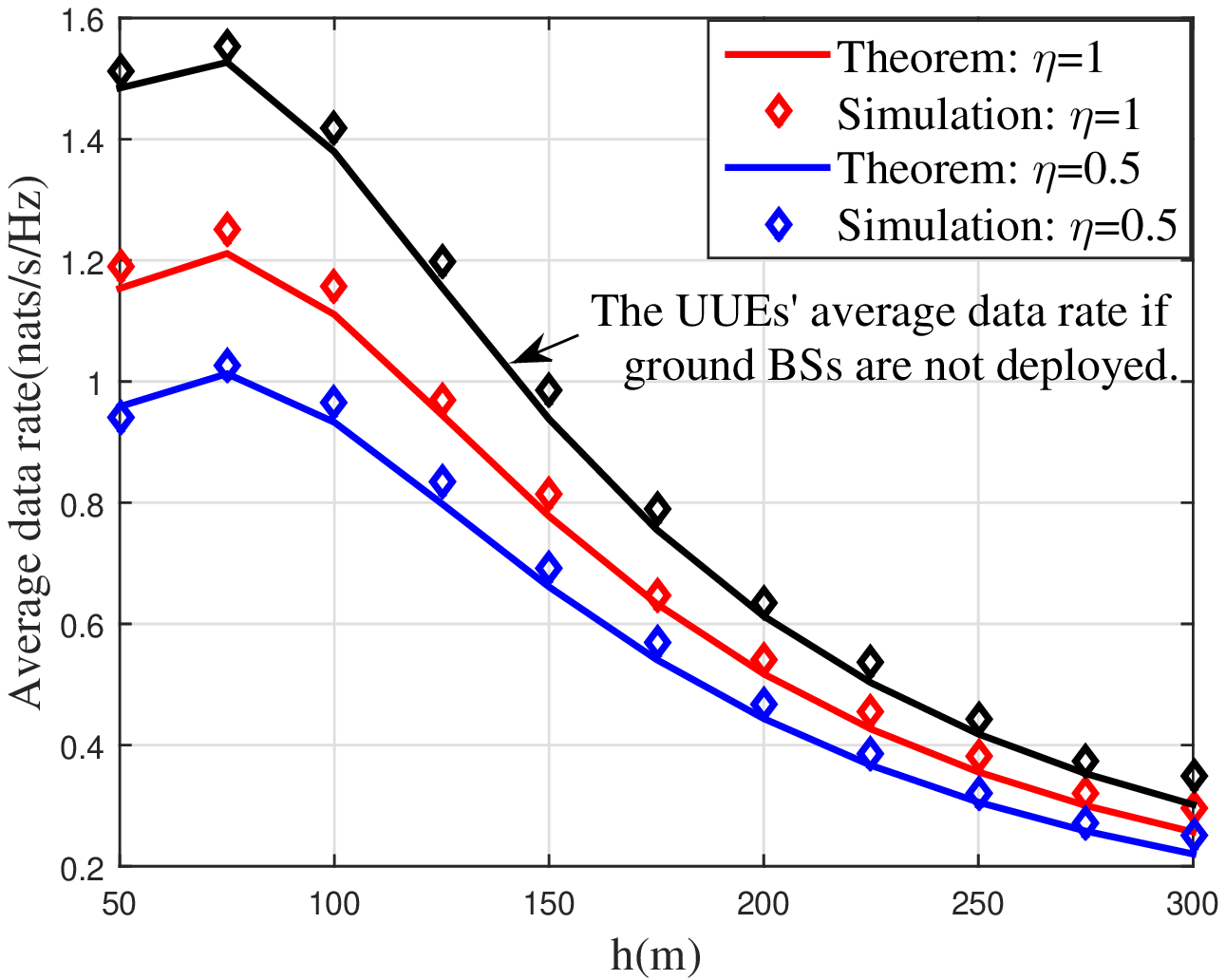}
\caption{Average data rate of UUEs. }
\label{fig:side:a}
\end{minipage}%
\begin{minipage}[t]{0.33\linewidth}
\centering
\includegraphics[width=1.9in]{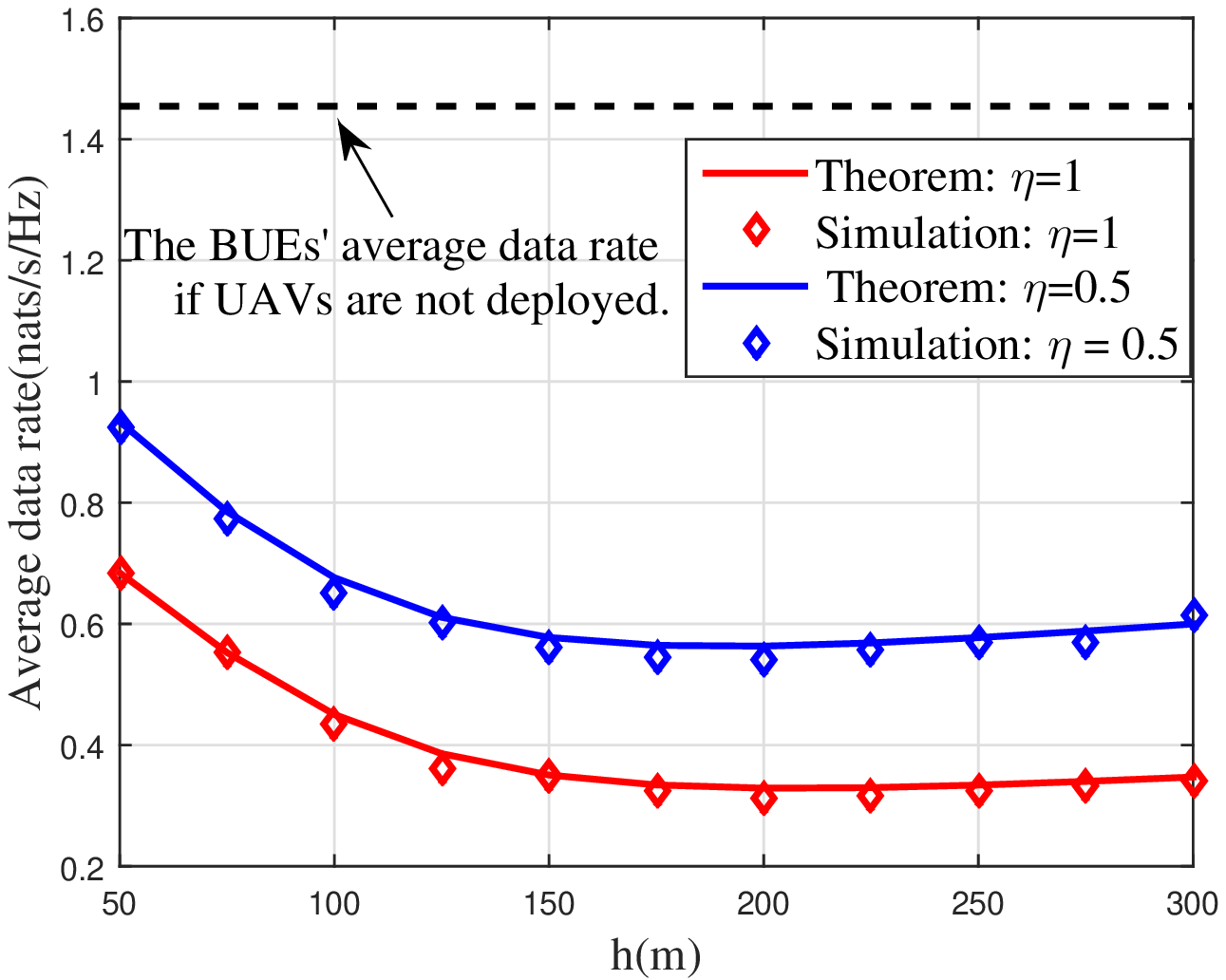}
\caption{Average data rate of BUEs.}
\label{fig:side:b}
\end{minipage}
\begin{minipage}[t]{0.33\linewidth}
\centering
\includegraphics[width=1.9in]{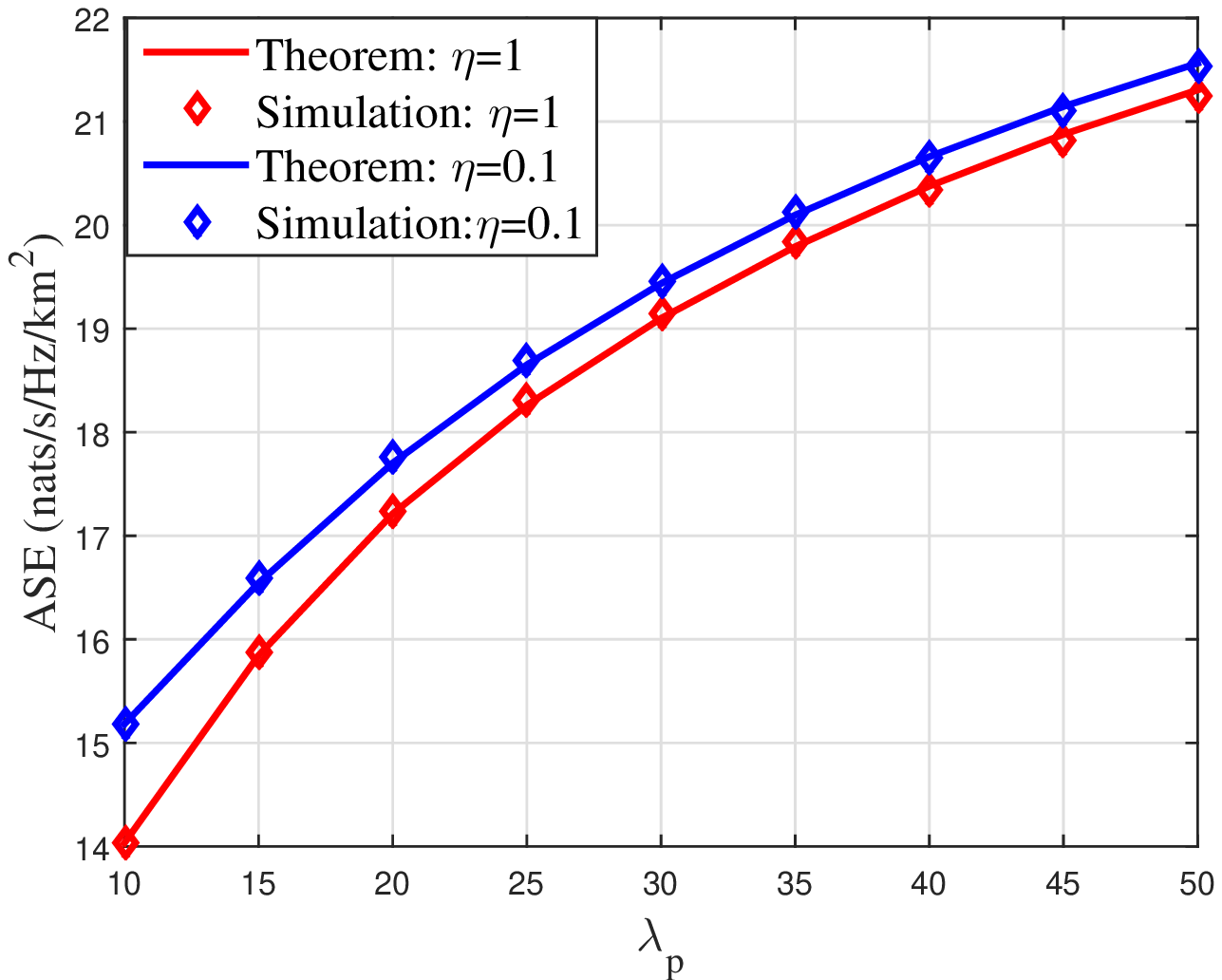}
\caption{System ASE.}
\label{fig:side:c}
\end{minipage}
\end{figure*}
With the help of capacity lemma \cite{ka2010lemma}, we can obtain the approximations of ${{\hat R}_u}$ and ${{\hat R}_B}$ by leveraging the gain (\ref{gain}) in Theorem 2.
\begin{thm}
 The approximations of ${{\hat R}_u}$ and ${{\hat R}_B}$ are:
 \end{thm}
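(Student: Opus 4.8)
The plan is to convert both average rates into integrals of the corresponding SIR complementary CDFs via the capacity lemma, and then evaluate those CCDFs by coupling the MISR-based gain of Theorem 1 with standard Poisson Laplace-transform machinery. Concretely, the capacity lemma \cite{ka2010lemma} states that for any nonnegative SIR,
\begin{small}
\begin{equation}
\mathbb{E}\left[\ln\left(1+\mathrm{SIR}\right)\right]=\int_{0}^{\infty}\frac{\mathbb{P}\left(\mathrm{SIR}>z\right)}{1+z}\,dz,
\end{equation}
\end{small}
so the tasks for $\hat{R}_u$ and $\hat{R}_B$ reduce to finding the coverage probabilities $\mathbb{P}(\mathrm{SIR}_0^s>z)$ and $\mathbb{P}(\mathrm{SIR}_0^b>z)$ and integrating them against $1/(1+z)$.

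For the UUE the serving link comes from the MHP tier, so I would invoke the defining approximation of the MISR method: the hardcore SIR distribution is a horizontally shifted copy of its equivalent-PPP counterpart, $\mathbb{P}(\mathrm{SIR}_0^s>z)\approx\mathbb{P}(\mathrm{SIR}_{\mathrm{PPP}}^s>z/\mathrm{G})$, with $\mathrm{G}$ the gain in (\ref{gain}) supplied by Theorem 1. It then remains to compute the equivalent-PPP coverage. Conditioning on the serving distance $r$ (distributed as $f_{r_u}$) and on the LoS/NLoS state (with probabilities $P_l(r),P_n(r)$), the desired signal carries Nakagami-$m_s$ fading, so the conditional coverage is the upper Gamma tail of $g_{00}^{ss}$. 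Writing that tail as a finite sum of derivatives of the interference Laplace transform evaluated at $s=m_s z r^{\alpha_s}/\mathrm{G}$, and factoring $\mathcal{L}_{I_b^s+I_u}=\mathcal{L}_{I_b^s}\mathcal{L}_{I_u}$, I would obtain each factor from the probability generating functional (PGFL) of the relevant PPP, splitting the UAV field into its independent LoS and NLoS thinnings and inserting the power-control ratio. Averaging over $r$ and summing over $s\in\{l,n\}$ then yields $\hat{R}_u$.

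For the BUE the situation is simpler on the serving side: the serving tier is the PPP of BSs and the desired link is Rayleigh ($m=1$), so no gain shift is needed and the conditional coverage collapses to the single Laplace transform $\mathcal{L}_{I_u^b+I_b}(z r^{\alpha_b})$ with $r$ drawn from $f_{r_b}$. Here $\mathcal{L}_{I_b}$ is the classical PPP expression, while the inter-tier term $\mathcal{L}_{I_u^b}$ is generated by the hardcore UAV field; to keep it tractable I would replace that field by its equal-density PPP surrogate, so that its Laplace transform again follows from the PGFL with the LoS/NLoS decomposition and the factor $\eta P_u/P_b$. Multiplying the two factors, averaging over $r$, and applying the capacity lemma produces $\hat{R}_B$.

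The main obstacle is twofold. Analytically, the heaviest step is assembling the equivalent-PPP UUE coverage under Nakagami-$m_s$ fading: the finite-sum-of-derivatives representation of the Gamma tail forces higher-order derivatives of a Laplace transform that is itself a product over two independent (LoS and NLoS) interfering fields, and keeping this in a closed enough form to display is delicate. Conceptually, the subtle point is justifying the two different simplifications used for the hardcore tier, namely the MISR horizontal-shift approximation on the desired link for the UUE versus the equal-density PPP surrogate for the interference seen by the BUE, and arguing that both are accurate in the interference-limited, high-reliability regime in which the underlying approximation is known to be tight.
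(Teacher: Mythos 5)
Your high-level architecture largely matches the paper's (capacity lemma, MISR gain shift on the UUE's hardcore interference via $I_u^{\rm PPP}/G$, equal-density PPP surrogate with no gain for the UAV interference seen by the BUE, LoS/NLoS independent thinning, PGFL evaluation, averaging over the serving distance), but there is a genuine gap at the central technical step. The paper never goes through coverage probabilities and Gamma-tail derivative sums. It uses the lemma of \cite{ka2010lemma} in its exponential form
\begin{equation*}
\mathbb{E}\left[\log\left(1+\frac{X}{Y}\right)\right]=\int_0^\infty \frac{1}{z}\,\mathbb{E}\left[e^{-zY}\right]\left(1-\mathbb{E}\left[e^{-zX}\right]\right)dz,
\end{equation*}
whereas you cite the same reference but state the CCDF identity $\int_0^\infty \mathbb{P}(\mathrm{SIR}>z)/(1+z)\,dz$. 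The difference is decisive under Nakagami fading: in the exponential form the desired-signal fading enters only through the moment $\mathbb{E}\left[e^{-zg_{00}^{ss}}\right]=\left(1+z/m_s\right)^{-m_s}$, and the interference enters only through a single, underived Laplace transform $\mathcal{L}_{I,s}(z)$, which the PGFL gives in closed form as $e^{-K_s(r,z)}$. That is exactly how the stated integrand $\frac{1-(1+z/m_s)^{-m_s}}{z}\,e^{-K_s(r,z)}$ arises (and analogously $\frac{1-(1+z)^{-1}}{z}\,e^{-K_b(r,z)}$ for the BUE). Your route instead forces the finite-sum-of-derivatives representation of the Gamma tail --- the step you yourself flag as the ``heaviest'' and leave unexecuted --- and even if carried through it would yield a sum over derivative orders of a product Laplace transform, not the expressions asserted in Theorem~2. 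The missing idea is precisely the exponential form of the capacity lemma, which eliminates every Laplace-transform derivative and is what makes the theorem's approximations tractable.

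Two secondary discrepancies are worth fixing as well. First, in the paper's $\hat{R}_u$ the serving distance is averaged against $f_{r_p}(r)$, the contact distribution of the \emph{parent} PPP of density $\lambda_p$ (consistent with replacing the whole UAV field by its equivalent PPP and shifting the threshold by $G$), not against the MHP-based $f_{r_u}$ as you propose; using $f_{r_u}$ would not reproduce the stated formula. Second, your closing worry about justifying ``two different simplifications'' is well placed but resolved differently than you suggest: the gain $G$ appears only inside $\kappa_s$ (the UUE's intra-tier term), while $\kappa_b$ carries no gain at all --- the BUE analysis simply treats the MHP interferers as a PPP of density $\lambda_u$, with the Rayleigh serving link making the Laplace-transform step elementary, exactly as you anticipated for that half of the proof.
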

  \begin{small}
 \begin{align}
&{{\hat R}_u} \!=\!\!\! \sum\limits_{s \in \left\{ {l,n} \right\}} {\int\limits_0^\infty  {\int\limits_h^\infty  {\frac{{1 \!\!-\!\! {{\left( {1 \!+\! \frac{z}{m_s}} \right)}^{ \!-\! {m_s}}}}}{z}} } }
{P_s}\left( r \right){e^{ - {K_s}(r,z)}}{f_{r_p}(r)}drdz, \\
&{{\hat R}_B}= \int\limits_0^\infty  {\int\limits_0^{{R_b}} {\frac{{1 - {{\left( {z + 1} \right)}^{ - 1}}}}{z}{{\rm{e}}^{ - {{\rm{K}}_b}\left( {r,z} \right)}}} } \frac{{2r}}{{R_b^2}}drdz,
 \end{align}
 \end{small}

where ${{K_s}(r,z)}$ and ${{{\rm K}_b}\left( {r,z} \right)}$ are given below:
 \begin{small}
\begin{align}
{K_s}(r,z)&= 2\pi {\lambda _p}\int\limits_r^\infty  {{\kappa _s}\left( {r,y,z} \right)ydy} \nonumber \\
 &+ \frac{{2\pi {\lambda _b}}}{{{\alpha _b}}}{\left( {\frac{{z{P_b}{r^{{\alpha _s}}}}}{{\eta {P_u}}}} \right)^{\frac{2}{{{\alpha _b}}}}}B\left( { \frac{2}{{{\alpha _b}}},1 -\frac{2}{{{\alpha _b}}}} \right),
\end{align}

\begin{equation}
\begin{array}{l}
{K_b}\left( {r,z} \right)\!= \!2\pi {\lambda _u}\int\limits_h^\infty  {{\kappa _b}\left( {r,y,z} \right)ydy}  \!+\! \frac{{2\pi {\lambda _b}}{ z^{\frac{2}{{{\alpha _b}}}}r^2}}{{{\alpha _b}}}B\left( { \frac{2}{{{\alpha _b}}},1 \!\!-\!\! \frac{2}{{{\alpha _b}}}} \right),
\end{array}
 \end{equation}
  \end{small}
with
 \begin{small}
 \begin{equation}
{\kappa _s}\left( {r,y,z} \right) \!=\! \sum\limits_{q \in \left\{ {l,n} \right\}} {\left[ {1 \!-\! {{\left( {1 \!+\! z\frac{{{y^{ \!-\! {\alpha _{_q}}}}{r^{{\alpha _s}}}}}{{{m_q}G}}} \right)}^{ \!-\! {m_q}}}} \right]{P_q}\left( y \right)},
 \end{equation}

\begin{equation}
{\kappa _b}\left( {r,y,z} \right) \!=\! \sum\limits_{q \in \left\{ {l,n} \right\}}
{\left[ {1 \!-\! {{\left( {1 \!+\! z\frac{{\eta {P_u}{r^{{\alpha _B}}}}}{{{m_q}P_b{y^{  {\alpha _{_q}}}}}}} \right)}^{ - {m_q}}}} \right]{P_q}\left( y \right)},
\end{equation}
\end{small}
where  $B\left( {P,Q} \right) = \int\limits_0^1 {{x^{P - 1}}{{\left( {1 - x} \right)}^{Q - 1}}} dx $.

\begin{proof}
See Appendix B.
\end{proof}

The expression of the system ASE is omitted for space saving.

\vspace{-6pt}
\begin{table}[!h]
\centering
\renewcommand{\tablename}
\caption{\centering{ TABLE I} \protect \\  NUMERICAL/SIMULATION PARAMETERS}\\


\begin{tabular}{|c|c|c|c|c|c|}
\hline
Symbol &  Value & Symbol &  Value& Symbol &  Value \\
\hline
${\lambda _u}$ & $10k{m^{ - 2}}$ & ${P_u}$ & $37dBm$ & $d$ &$100m$ \\
\hline
${\lambda _b}$ & $10k{m^{ - 2}}$ & ${P_b}$ & $37dBm$ &${\alpha _b}$ & $4$ \\
\hline
${\alpha _l}$ & $3$ & ${\alpha _n}$& $4$ &$B$ & $0.136$\\
\hline
 $m_l$ & $3$ & $m_n$ & $1$  & $C$ & $11.95$\\
\hline
\end{tabular}
\label{tab1}
\vspace{-8pt}
\end{table}


We provide simulations to evaluate our analytical results in Theorem 2 by MATLAB. Considering the propagation environment in dense urban area \cite{aal2014optimal}, system parameters are set according to Table I. Fig. 2 and Fig. 3 illustrate the UUEs' and BUEs' average data rate, respectively. We can observe that the simulation results match the analytical results quite well. Additionally, we depict the UUEs' and BUEs' average data rate in a single-tier network in Fig. 2 and Fig. 3, respectively.


Fig. 4 describes the system ASE as a function of $\lambda_u$, we can find that deploying UAVs can improve the system ASE. However, the deployment of UAVs will bring severe inter-tier interference to BUEs additionally, leading to the performance degradation of BUEs. Based on the premise of protecting BUEs' performance, we can improve the BUEs' performance at the expense of UUEs'. Thus, it is necessary to discuss the optimal parameters of UAVs to achieve a tradeoff between the UUEs' and BUEs' performance.

\vspace{-6pt}
\section{Discussions on Optimal Parameters of UAVs}
 In this section, we discuss the optimal parameters to achieve a tradeoff between the UUEs' and BUEs' average data rate.
\vspace{-6pt}
\subsection{Problem Formulation}

In order to utilize UAVs efficiently while guaranteeing the performance of BUEs, it is necessary to investigate the optimal parameters of UAVs. In other words, we should find the optimal $\left( {\eta ,h} \right)$ to maximize ${{\hat R}_u}(\eta ,h)$ under the constrain ${{\hat R}_B}(\eta ,h) \ge {R_{th}}$, where ${R_{th}} $ is the target of BUEs' average data rate. Towards this end, we formulate an optimization problem as follows:
  \begin{small}
 \begin{align}
P_0:&\mathop{\max }\limits_{h;\eta \in [0,1]} {\kern 1pt} {\kern 1pt} {\kern 1pt} {{\hat R}_u}(\eta ,h)\nonumber\\
 &subject{\kern 1pt} {\kern 1pt} to{\kern 1pt} {{\hat R}_B}(\eta ,h) \ge {R_{th}}.
\end{align}
 \end{small}
 \begin{thm}
Based on the Leibnitz integral rule, we can prove that ${{\hat R}_u}(\eta ,h)$ increases with $\eta$ but ${{\hat R}_B}(\eta ,h)$ decreases with $\eta$ monotonically.
 \end{thm}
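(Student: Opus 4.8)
The plan is to prove both monotonicities by differentiating $\hat{R}_u(\eta,h)$ and $\hat{R}_B(\eta,h)$ with respect to $\eta$ and determining the sign of each derivative. In both expressions the limits of integration ($z$ over $[0,\infty)$ and $r$ over $[h,\infty)$ or $[0,R_b]$) are independent of $\eta$, so the Leibnitz integral rule reduces the task to differentiating the integrands pointwise and integrating. The crucial structural observation is that, in each integrand, the only factor carrying $\eta$ is the exponential $e^{-K_s(r,z)}$ (respectively $e^{-K_b(r,z)}$), while the remaining prefactors are manifestly non-negative: $\frac{1-(1+z/m_s)^{-m_s}}{z}P_s(r)f_{r_p}(r)\ge 0$ and $\frac{1-(z+1)^{-1}}{z}\frac{2r}{R_b^2}=\frac{1}{z+1}\frac{2r}{R_b^2}\ge 0$ for all $z,r>0$. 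Hence the sign of $\partial_\eta\hat{R}_u$ and $\partial_\eta\hat{R}_B$ is dictated entirely by the signs of $\partial_\eta K_s$ and $\partial_\eta K_b$.

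First I would handle $\hat{R}_u$. Inspecting $K_s(r,z)$, the dependence on $\eta$ lives only in its second term through the factor $\eta^{-2/\alpha_b}$. Differentiating gives $\partial_\eta K_s=-\frac{2}{\alpha_b}\,\eta^{-2/\alpha_b-1}\cdot\frac{2\pi\lambda_b}{\alpha_b}\bigl(zP_br^{\alpha_s}/P_u\bigr)^{2/\alpha_b}B\bigl(\tfrac{2}{\alpha_b},1-\tfrac{2}{\alpha_b}\bigr)<0$, since $\alpha_b>2$ forces $2/\alpha_b>0$. Consequently $\partial_\eta e^{-K_s}=-e^{-K_s}\,\partial_\eta K_s>0$, so the whole integrand is increasing in $\eta$; integrating against non-negative factors preserves this, giving $\partial_\eta\hat{R}_u>0$.

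Next I would treat $\hat{R}_B$ symmetrically. Here the $\eta$-dependence of $K_b$ sits in its first term, through $\kappa_b(r,y,z)$. Differentiating each summand by the chain rule yields $\partial_\eta\bigl[1-(1+x_q)^{-m_q}\bigr]=m_q(1+x_q)^{-m_q-1}\,\partial_\eta x_q$ with $x_q=z\frac{\eta P_u r^{\alpha_B}}{m_q P_b y^{\alpha_q}}$, so $\partial_\eta\kappa_b=\sum_{q}z\frac{P_u r^{\alpha_B}}{P_b y^{\alpha_q}}\bigl(1+x_q\bigr)^{-m_q-1}P_q(y)>0$, each term being a product of positive quantities. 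Integrating this positive integrand against $2\pi\lambda_u\,y\,dy$ keeps it positive, so $\partial_\eta K_b>0$ and hence $\partial_\eta e^{-K_b}<0$. The integrand therefore decreases in $\eta$, yielding $\partial_\eta\hat{R}_B<0$.

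The main obstacle is the rigorous justification of the interchange of differentiation and integration that the Leibnitz rule invokes. To apply its dominated-convergence form I would exhibit, on every compact subinterval $[\eta_0,1]\subset(0,1]$, an integrable dominating function for $|\partial_\eta(\text{integrand})|$ that is independent of $\eta$, paying particular attention to integrability near $z=0$ (where the prefactors stay bounded, the $\hat{R}_B$ prefactor equalling $\frac{1}{z+1}\frac{2r}{R_b^2}$) and to the decay of $e^{-K}$ as $z,r\to\infty$; the factor $\eta^{-2/\alpha_b}$ is harmless once $\eta$ is bounded below by $\eta_0>0$. Once domination is established the pointwise sign computations above transfer directly to the integrals, and the claimed monotonicities follow.
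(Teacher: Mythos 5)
Your proposal is correct and follows essentially the same route as the paper's Appendix C: differentiate under the integral sign via the Leibnitz rule, observe that $\eta$ enters only through $e^{-K_s(r,z)}$ and $e^{-K_b(r,z)}$, and read off the signs from $\partial_\eta K_s<0$ (via the $\eta^{-2/\alpha_b}$ factor) and $\partial_\eta K_b>0$ (via the chain rule on $\kappa_b$), exactly matching the paper's expressions (C.1) and (C.2). Your added dominated-convergence justification for the interchange is a point of rigor the paper simply omits, but it does not change the argument.
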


\begin{proof}
See Appendix C.
\end{proof}
Thus, it is not difficult to show that the inequality constraint in $P_0$ can be replaced by the equality constrain ${\hat R_B}(\eta ,h){\rm{ = }}{R_{th}}$ using reduction to absurdity.

\vspace{-6pt}
\subsection{The Optimal Parameters of UAVs}
In this subsection, we focus on the optimal solution to ${P_0}$.  Firstly, we exhaust all the values of $h$ in the range ${\left[ h_{min},h_{max} \right ] }$ with a certain precision, where ${h_{min}}$, ${h_{max}}$ are set as $50$ and $300$ to determine the search range of $h$.  Then, based on the Theorem 3, we can solve ${\hat R_B}(\eta ,h){\rm{ = }}{R_{th}}$ by binary search and obtain all the combinations of $(\eta ,h)$.  Finally, we select $(\eta^* ,h^*)$ as optimal result that maxmize ${{\hat R}_u}(\eta ,h)$. The optimal result is also shown in table II.
 \begin{table}[!h]
\centering
\renewcommand{\tablename}
\caption{\centering{ TABLE II} \protect \\  THE OPTIMAL SOLUTION TO ${P_0}$}\\
\label{tab2}

\begin{tabular}{|c|c|c|c|c|c|c|c|}
\hline
${R_{th}}\left( {nats/s/Hz} \right)$ &0.6&0.7&0.8&0.9&1.0&1.1\\
\hline
${h^ * }\left( m \right)$ &70&50&50&50&50&50\\
\hline
${\eta ^ * }$&0.87&0.92&0.53&0.30&0.17&0.09\\
\hline
${\hat R_u^ *}\left( {nats/s/Hz} \right)$ &1.20&1.14&1.07&0.99&0.90&0.78\\
\hline

\end{tabular}
\vspace{-6pt}
\end{table}

 As the UAVs should sacrifice their own performance to guarantee the performance of the BUEs, we observe that ${\hat R_u}$ monotonously decreases with ${R_{th}}$. What's more, we notice that the optimal $\eta^*$ decreases with higher demand of ${R_{th}}$ and the optimal $ h^*$ is always the the lowest among all the combinations satisfying BUEs' target rate except the case of ${R_{th}=0.7}$. The reason is  that we can obtain higher ${\hat R_u^*}$ with lower power control factor due to the existence of the optimal altitude for UUEs' average data rate. The reason of the general trend of $\eta^*$ with ${R_{th}}$ is that the inter-tier interference is less serious with lower transmit power of UAVs which is used to satisfy higher target BUEs' average data rate. According to the results, we can easily observe that it is more effective to adjust UAVs' transmit power than the height to maximize the UUEs' average data rate. Thus, the maximal UUEs' average data rate is achieved mainly by power control factor of UAVs under our proposed model.
\vspace{-6pt}
\section{Conclusion}
In this paper, we studied statistic performance considering a two-tier UAV-assisted network. Based on MISR-based gain method, we have derived the concise approximations of the users' average data rate. The accuracy of the theoretical results have been verified through simulations. With the help of the more tractable approximations, the properties of users' average data rate are observed. Moreover, we have discussed the optimal combinations of height and power control factor of UAVs to maximize UUEs' average data rate while satisfying BUEs' target average data rate. The result has shown that power control factor has more significant effect on the performance than height under our proposed model.

\vspace{-6pt}
\appendix

\vspace{-6pt}
\subsection{Proof of Theorem 1 \label{appendix:A}}
\setcounter{equation}{0}
\renewcommand{\theequation}{A.\arabic{equation}}
The expression for ${{\rm{{\mathbb{E}}}}_{p,s}}\left( {{\rm{I\bar SR}}\left| r \right.} \right)$ and ${{\rm{{\mathbb{E}}}}_{u,s}}\left( {{\rm{I\bar SR}}\left| r \right.} \right)$ where $s \!\in\! \{l,n\}$ in PPP and MHP model for UAV tier are obtained as follws. As for PPP ${\Phi _p}=\{x_0^p,x_1^p,x_2^p,...\}$, it also divided into two independent sets, i.e., ${\Phi _p}= {\Phi_{p,l}} \bigcup {\Phi_{p,n}}$, where ${\Phi_{p,l}}=\{x_0^{p,l},x_1^{p,l},x_2^{p,l},...\}$ and  ${\Phi_{p,n}}=\{x_0^{p,n},x_1^{p,n},x_2^{p,n},...\}$.  For simplicity of notation, we denote the distance of serving link as r,  the distance interfering link as y. Assuming the serving A2G channel is LoS channel:
 \begin{small}
 \begin{align}
{{{\rm{{\mathbb{E}}}}_{p,l}}\left( {{\rm{I\bar SR}}\left| r \right.} \right)}
&{ \!=\!  \left[ {\sum\limits_{x_i^{p,l}\!  \in\!   {\Phi _{p,l}}\backslash \left\{ x_0^{p,l} \right\}} \frac{\|x_0^{p,l}\|^{\alpha _l}}{\|x_i^{p,l}\|^{  {\alpha _l}}} \!  +\!  \sum\limits_{x_j^{p,n} \! \in\!  {\Phi _{p,n}}} \frac{\|x_0^{p,l}\|^{\alpha _l}}{{\|x_j^{p,n}\|}^{ {\alpha _n}}} } \right]}\nonumber \\
&{\mathop  = \limits^{\left( a \right)} \sum\limits_{q \in \left\{ {l,n} \right\}} {2\pi {\lambda _p}{r^{{\alpha _l}}}\int_r^\infty  {{P_q}(y){y^{1 - {\alpha _q}}}} dy}},
\end{align}
\end{small}
 where $(a)$ follows Campbell's theorem. 


 For a MHP $\Phi_u$, we first analyze the case of the serving link is LoS link, the expression of ${{\rm{{\mathbb{E}}}}_{u,l}}$ is:
 \begin{small}
 \begin{align}
&{{{\rm{{\mathbb{E}}}}_{u,l}}\left( {{\rm{I\bar SR}}\left| r \right.} \right)}{= {{\sum _{ x_i^{l} \in {\Phi _{l}}\backslash \left\{  x_0^{l}\right\}}}{{\left( {\frac{\|x_0^{l}\|}{{ \|x_i^{l}\|}}} \right)}^{{\alpha _l}}} \!\!\!+\!\!\! \sum\limits_{x_j^{n} \in {\Phi _{n}}} { {\frac{{{\|x_0^{l}\|^{{\alpha _l}}}}}{{{ \|x_j^{n}\|}^{{\alpha _n}}}}} } } }\nonumber \\
&\mathop = \limits^{\left( b \right)} \frac{{{r^{{\alpha _l}}}}}{{{\lambda _u}}}\left[ {\int\limits_0^{2\pi } {\int\limits_{{v_1}}^{{v_2}} {\sum\limits_{q \in \left\{ {l,n} \right\}} {\frac{{{P_q}(v){\chi _1^{(2)}\left( v \right)}v}}{{{{(\sqrt {{v^2} + {{\rm{r}}^2} - 2xv\cos \left( \varphi  \right)} )}^{{\alpha _q}}}}}}} } d\varphi dv} \right.\nonumber \\
&+ \left. {\int\limits_0^{2\pi } {\int\limits_{{v_2}}^\infty  {\sum\limits_{q \in \left\{ {l,n} \right\}} {\frac{{{P_q}(v){\chi _2^{(2)}\left( v \right)}v}}{{{{(\sqrt {{v^2} + {{\rm{r}}^2} - 2xv\cos \left( \varphi  \right)} )}^{{\alpha _q}}}}}}} } dvd\varphi } \right],
\end{align}
\end{small}
where $(b)$ also follows the Campbell's theorem with PDF in (\ref{pdf}). The derivation of  ${{\rm{{\mathbb{E}}}}_{p,n}}\left( {{\rm{I\bar SR}}\left| r \right.} \right)$, ${{{\rm{{\mathbb{E}}}}_{u,n}}\left( {{\rm{I\bar SR}}\left| r \right.} \right)}$ is the same as ${{\rm{{\mathbb{E}}}}_{p,l}}\left( {{\rm{I\bar SR}}\left| r \right.} \right)$, ${{{\rm{{\mathbb{E}}}}_{u,l}}\left( {{\rm{I\bar SR}}\left| r \right.} \right)}$expect that ${r^{{\alpha _l}}}$ is replaced by ${r^{{\alpha _n}}}$.
\vspace{-10pt}
\subsection{Proof of Theorem 2 \label{appendix:B}}
\setcounter{equation}{0}
\renewcommand{\theequation}{B.\arabic{equation}}
In order to derive the expression for the users' average  data rate, we introduce the capacity calculation lemma in \cite{ka2010lemma}:
 \begin{small}
 \begin{equation}
 \label{rate}
 \mathbb{E}\left[\rm{log} \left(1 + \frac{X}{Y}\right)\right]= \int\limits_0^\infty  {\frac{1}{z}\mathbb{E}\left[ e^{-zY}\right] \left( {1 - \mathbb{E}\left[ e^{-zX}\right]} \right)} dz.
 \end{equation}
 \end{small}
Based on the MISR-based gain, the SIR distribution of non-PPP network can be accurately approximated by that of a PPP through scaling the threshold z,
 i.e.
  \begin{small}
\begin{align}
{\mathbb{P}}\left( SIR_0^s >z \right)&= {\mathbb{P}}\left( g_{00}^{ss} >z{\|x_0^{p, s}\|^{  {\alpha _s}}}\left({I_b^s} + {I_u}\right) \right)\nonumber\\
 &\approx \mathbb{P}\left( g_{00}^{ss} >z{\|x_0^{p, s}\|^{  {\alpha _s}}}\left({I_b^s} + {I_u^{\rm{PPP}}}/{G}\right) \right),
\end{align}
\end{small}
with

  \begin{small}
\begin{align}
{I_u^{\rm{PPP}}}= {\sum\limits_{x_i^s \in {\Phi _{p,s}} \backslash \left\{ x_0^s \right\}} {{\|x_i^s\|}^{ - {\alpha _s}}g_{i0}^{ss}}  + \sum\limits_{x_j^{p,s'} \in {\Phi _{p,s'}}} {{\|x_j^{p,s'}\|}^{ - {\alpha _{s'}}}g_{j0}^{{s'}s}} },
\end{align}
\end{small}
   According to the relationship between the SIR distribution and capacity, we have ${{{\hat R}_u}} = \sum\limits_{s \in \left\{ {l,n} \right\}}\int\limits_0^\infty  {\frac{{\mathbb{P}}\left( SIR_0^s >z \right)}{{1 + z}}dz} $, thus:
 \begin{small}
 \begin{align}
{{{\hat R}_u}}  = \sum\limits_{s \in \left\{ {l,n} \right\}} \int\limits_0^\infty  {\frac{{{1 \!-\! {\mathbb{E}}\left[ {{e^{ \!-\! zg_{00}^{ss}}}} \right]}}}{z}  {{\mathcal{L}}_{I,s}} \left( z \right)  dz}.
\end{align}
\end{small}
with ${{\mathcal{L}}_{I,s}} \left( z \right)=\mathbb{E} \left[e^{-z{\|x_0^{p, s}\|^{  {\alpha _s}}}\left({I_b^s+ {I_u^{\rm{PPP}}}/{G}}\right) }\right]$.
 Then, we deduce the expression of ${{\mathbb{E}}\left[ {{e^{ - zg_{00}^{ss}}}} \right]}$ and ${{\mathcal{L}}_{I,s}} \left( z \right)$.
${\mathbb{E}}\left[ {e^{ - zg_{00}^{ss}}} \right]= {\left( {z/{m_s} + 1} \right)^{ - {m_s}}}$,
where $g_{00}^{ss}$ follows the PDF in (\ref{nakagami}) with parameter $m_s$. Assuming that serving UAV is LoS channel, the approximation of ${{\mathcal{L}}_{I,l}} \left( z \right)$ can be calculated below: 
 \begin{small}
\begin{align}
&{{\mathcal{L}}_{I,l}} \left( z \right){ = \mathop {{\rm{ }}{\mathbb{E}}}\limits_{x_i^{p,l} \!\in\! {\Phi _{p,l}}\backslash \left\{ x_0^{p,l} \right\}} \left[ {{e^{ - \frac{{\|x_0^{p,l}\|^{{\alpha _l}}}{g_{i0}^{ll}}}{{\|x_i^{p,l}\|}^{ \alpha _l}{G}}}}} \right] }\nonumber \\
&{\times \mathop {{\rm{ }}{\mathbb{E}}}\limits_{x_j^{p,n}\in {\Phi _{p,n}}} \left[ {{e^{ - \frac{{\|x_0^{p,l}\|^{{\alpha _l}}}{g_{j0}^{nl}}}{{\|x_j^{p,n}\|}^{ \alpha _n}G}}}} \right] }
{\times \mathop {{\rm{ }}{\mathbb{E}}}\limits_{x_k^{b}\in {\Phi _b}} \left[ {{e^{ - \frac{{{P_b}}{\|x_0^{p,l}\|^{{\alpha_l}}}g_{k0}^{bl}}{{\eta {P_u}}{\|x_k^{b}\|}^{ {\alpha_b}}}}}} \right]}\nonumber \\
&{\mathop {{\rm{  }} = }\limits^{\left(c \right)} \exp \left\{ { \!\!-\!2\pi {\lambda _p}\int\limits_r^\infty  {\!\!\sum\limits_{q \!\in\! \left\{ {l,n} \right\}} {\left[ {1 \!\!-\!\! {{\left( {1 \!\!+\!\! z\frac{{{y^{ \!-\! {\alpha _{_q}}}}{r^{{\alpha _l}}}}}{{{m_q}G}}} \right)}^{ \!\!-\!\! {m_q}}}} \right]{P_q}\left( y \right)}ydy} } \right\}}\nonumber \\
&{ \times \exp \left\{ { - \frac{{2\pi {\lambda _b}}}{{{\alpha _b}}}{{\left( {\frac{{z{P_b}{r^{{\alpha _l}}}}}{{\eta {P_u}}}} \right)}^{\frac{2}{{{\alpha _b}}}}}B\left( {\frac{2}{{{\alpha _b}}},1- \frac{2}{{{\alpha _b}}}} \right)} \right\}},
\label{interf}
\end{align}
 \end{small}
where (c) follows the probability generating functional of the aggregate interference. $g_{i0}^{ll}$, $g_{j0}^{nl}$ follows PDF in (\ref{nakagami}) with parameters  ${{m_l}}$, ${{m_n}}$ and $g_{k0}^{bl}\sim \exp(1)$. Above all, ${{\mathcal{L}}_{I,n}}(z) $ is the same as  (\ref{interf}) except that ${r^{{\alpha _l}}}$ is replaced by ${r^{{\alpha _n}}}$.

The BUEs' approximation average data rate is
 \begin{small}
 \begin{align}
{{\hat R}_B}= \int\limits_0^\infty  {\int\limits_0^{{R_b}} {\frac{{1 - {\mathbb{E}}\left[ {{e^{ - zg_{00}^{bb}}}} \right]}}{z}} }{{\mathcal{L}}_{I,b}}(z) {f_{r_b}}\left( r \right)drdz,
\end{align}
\end{small}
with ${{\mathcal{L}}_{I,b}}(z)=\mathbb{E} \left[e^{-z{\|x_0^b\|^{ - {\alpha _b}}}\left({{I_u^b} + {I_b}}\right) }\right]$. Then we deduce the expression of ${{\mathbb{E}}\left[ {{e^{ - zg_{00}^{bb}}}} \right]}$ and ${{\mathcal{L}}_{I,b}}(z)$,
${\mathbb{E}}\left[ {{e^{ - zg_{00}^{bb}}}} \right] = {\left( {z + 1} \right)^{ - 1}}$, where $g_{00}^{bb}\sim \exp(1)$.
The expression of ${{\mathcal{L}}_{I,b}}(z)$ is:
\begin{small}
\begin{align}
&{{\mathcal{L}}_{I,b}}(z) = \mathop{ \mathbb{E}}\limits_{x_i^l \in {\Phi _{l}}} \left[ {{e^{ - \frac{{\eta P_u}\|x_0^b\|^{{\alpha _b}}}{P_b\|x_i^l\|^{ {\alpha _l}}}g_{i0}^{lb}}}} \right] \nonumber \\
&\times \mathop { \mathbb{E}}\limits_{x_j^n \in {\Phi _{n}}} \left[ {{e^{ - \frac{{\eta P_u}\|x_0^b\|^{{\alpha _b}}}{P_b\|x_j^n\|^{ {\alpha _n}}}g_{j0}^{nb}}}} \right]
\times \mathop { \mathbb{E}}\limits_{x_k^b\in {\Phi _b}\backslash \left\{ x_0^b \right\}} \left[ {{e^{ - \frac{\|x_0^b\|^{{\alpha _b}}}{\|x_k^b\|^{ {\alpha _b}}}g_{k0}^{bb}}}} \right]\nonumber
 \end{align}
 \begin{align}
 &= \exp \left\{ {  \!- \! 2\pi {\lambda _u}\int\limits_h^\infty  {  \!\!\sum\limits_{q \in \left\{ {l,n} \right\}}
{  \!\!\left[ {1 \! \!-\! \! {{\left( {1  \!\!+\! \! z\frac{{\eta {P_u}{r^{{\alpha _b}}}}}{{{m_q}P_b{y^{  {\alpha _q}}}}}} \right)}^{  \!- \! {m_q}}}} \right]{P_q}\left( y \right)}ydy} } \right\}\nonumber \\
 &\times \exp \left[ { - \frac{{2\pi {\lambda _b}}}{{{\alpha _b}}}{z^{\frac{2}{{{\alpha _b}}}}}{r^2}B\left( {\frac{2}{{{\alpha _b}}},1 - \frac{2}{{{\alpha _b}}}} \right)} \right].
\end{align}
\end{small}
\subsection{Proof of Theorem 3 \label{appendix:C}}
\setcounter{equation}{0}
\renewcommand{\theequation}{C.\arabic{equation}}

In order to derive the monotonicity of ${{\hat R}_B}$ and ${{\hat R}_u}$ with $\eta$, we will differentiate ${{\hat R}_B}$ and ${{\hat R}_u}$ with respect to $\eta$, respectively. Based on Leibnitz integral rule, the analytical results are as follows:

\begin{small}
\begin{align}
\label{C1}
\frac{{\partial {{\hat R}_u}}}{{\partial \eta }} &= \sum\limits_{s \in \left\{ {l,n} \right\}} {\int\limits_0^\infty  {\int\limits_h^\infty  {\frac{{1 - {{\left( {1 + \frac{z}{{{m_s}}}} \right)}^{ - {m_s}}}}}{z}} } } {P_s}\left( r \right){e^{ - {K_s}(r,z)}}{f_{{r_p}}}(r) \nonumber \\
 &\times \frac{{2\pi {\lambda _b}}}{{{\alpha _b}}}{\left( {\frac{{z{P_b}{r^{{\alpha _s}}}}}{{{P_u}}}} \right)^{\frac{2}{{{\alpha _b}}}}}\frac{2}{{{\alpha _b}}}{\eta ^{ - \frac{2}{{{\alpha _b}}} - 1}}B\left( {\frac{2}{{{\alpha _b}}},1 - \frac{2}{{{\alpha _b}}}} \right)drdz
\end{align}
\end{small}

\begin{small}
\begin{align}
\label{C2}
&\frac{{\partial {{\hat R}_B}}}{{\partial \eta }} = {\rm{ - }}\int\limits_0^\infty  {\int\limits_0^{{R_b}} {\frac{{1 - {{\left( {z + 1} \right)}^{ - 1}}}}{z}{{\rm{e}}^{ - {{\rm{K}}_b}\left( {r,z} \right)}}} } \frac{{2r}}{{R_b^2}}\nonumber \\
 &\times \sum\limits_{q \in \left\{ {l,n} \right\}} {\int\limits_h^\infty  {2\pi {\lambda _u}\frac{{z{P_u}{r^{{\alpha _b}}}}}{{{P_b}}}{{\left( {1 + \frac{{z{P_u}{r^{{\alpha _b}}}\eta }}{{{m_q}{P_b}{y^{{\alpha _q}}}}}} \right)}^{ - {m_q} - 1}}{y^{1 - {\alpha _q}}}dy} } drdz
\end{align}
\end{small}
Based on the expressions above, we can easily find that the differential result (\ref{C1}) is positive and (\ref{C2}) is negative. Thus, ${{\hat R}_u}$ increases with ${\eta}$ and ${{\hat R}_B}$ decreases with $\eta$ monotonically.

\vspace{-6pt}
\bibliographystyle{IEEEtran}
\bibliography{IEEEabrv,myref}

\end{document}